\DeclarePairedDelimiterX{\Expval}[2]{\langle}{\rangle}{#2\,\delimsize\vert\,#1\,\delimsize\vert\,#2}
\newtheorem{lemma}{Lemma}
\newtheorem{theorem}{Theorem}
\newcommand{\BigO}{\mathcal{O}}
\newcommand{\WF}{\psi(\bm{\theta})}
\newcommand{\red}[1]{\textcolor{red}{#1}}
\newcommand{\bfred}[1]{\textbf{\red{#1}}}
\newcommand{\sidecaption}[1]% #1 = label name
{\raisebox{\abovecaptionskip}{\begin{subfigure}[t]{1.6em}
			\caption[singlelinecheck=off]{}% do not center
			\label{#1}
	\end{subfigure}}\ignorespaces}
\def\BibTeX{{\rm B\kern-.05em{\sc i\kern-.025em b}\kern-.08em
    T\kern-.1667em\lower.7ex\hbox{E}\kern-.125emX}}
\title{Implementing Slack-Free Custom Penalty Function for QUBO on Gate-Based Quantum Computers}
\begin{document}
\author{\IEEEauthorblockN{Xin Wei LEE}
\IEEEauthorblockA{\textit{School of Computing and Information Systems} \\
\textit{Singapore Management University}\\
80 Stamford Rd, Singapore \\
xwlee@smu.edu.sg}
\and
\IEEEauthorblockN{Hoong Chuin LAU}
\IEEEauthorblockA{\textit{School of Computing and Information Systems} \\
\textit{Singapore Management University}\\
80 Stamford Rd, Singapore \\
hclau@smu.edu.sg}}

\maketitle

\begin{abstract}
    Solving NP-hard constrained combinatorial optimization problems using quantum algorithms remains a challenging yet promising avenue toward quantum advantage.
    Variational Quantum Algorithms (VQAs), such as the Variational Quantum Eigensolver (VQE), typically require constrained problems to be reformulated as unconstrained ones using penalty methods.
    A common approach introduces slack variables and quadratic penalties in the QUBO formulation to handle inequality constraints.
    However, this leads to increased qubit requirements and often distorts the optimization landscape, making it harder to find high-quality feasible solutions.
    To address these issues, we explore a slack-free formulation that directly encodes inequality constraints using custom penalty functions, specifically the exponential function and the Heaviside step function.
    These step-like penalties suppress infeasible solutions without introducing additional qubits or requiring finely tuned weights.
    Inspired by recent developments in quantum annealing and threshold-based constraint handling in gate-based algorithms, we implement and evaluate our approach on the Multiple Knapsack Problem (MKP).
    Experimental results show that the step-based formulation significantly improves feasibility and optimality rates compared to unbalanced penalization, while reducing overall qubit overhead.
\end{abstract}

\section{Introduction}
Solving NP-hard constrained combinatorial optimization problems on quantum computers poses a significant challenge due to the need to enforce complicated constraints within the quantum framework.
Such problems are not only theoretically intractable at scale but also ubiquitous in domains like logistics and finance, motivating intense interest in quantum algorithms that might surpass classical methods.
Quantum annealers and gate-based variational quantum algorithms (VQA), notably the Quantum Approximate Optimization Algorithm (QAOA)~\cite{farhi2014quantum} and the Variational Quantum Eigensolver (VQE)~\cite{vqe}
have emerged as promising approaches for tackling these NP-hard problems.
These algorithms work by encoding the optimization problem into a Hamiltonian whose ground state encodes an optimal (or near-optimal) solution. However, incorporating hard constraints
into such quantum formulations is non-trivial, as the naive approach of directly mapping a constrained problem often requires additional encoding techniques.

In practice, the standard strategy is to convert a constrained problem into an equivalent unconstrained one, known as quadratic unconstrained binary optimization (QUBO),
by adding penalty terms to the objective function that energetically penalize any violation of the constraints.
Inequality constraints are typically handled by introducing extra slack binary variables that transform them into equalities before applying a squared penalty.
Although this penalty method ensures that any constraint violation increases the cost, it comes with serious drawbacks.
In particular, the squared-error penalties introduce dense higher-order interactions among the original decision variables.
In quantum annealing implementations, such dense coupling exceeds the sparse hardware connectivity, necessitating complex embeddings that use multiple physical qubits per logical variable.
In gate-based variational algorithms, the added slack qubits and coupling terms dramatically expand the state search space and create a rugged energy landscape,
making it harder for the hybrid quantum-classical optimizer to converge~\cite{in-constraint,assist-cvrp}.
% The inclusion of slack variables blows up the problem size (in terms of qubit count) and complicates the parameter landscape to be navigated.
This can lead to suboptimal solutions or infeasible outputs if the penalty weight is not perfectly tuned, underscoring the difficulty of faithfully enforcing constraints using conventional quadratic penalty encoding.

An alternative approach to handling constraints was pioneered by Ohzeki et al. in the context of quantum annealing~\cite{ohzeki-inequality,ohzeki-subgradient}.
Instead of relying on heavy penalty terms, they formulated the partition function of the problem so that only states satisfying the constraints contribute.
In this method, each constraint is enforced by a Heaviside step function included in the partition function, which evaluates to 1 if the constraint holds (and 0 if it is violated).
This effectively filters out infeasible configurations without requiring a large penalty coefficient.
% Through a Hubbard–Stratonovich transformation, the initially quadratic enforcement term is converted into linear terms involving new auxiliary variables.
As a result, the Ohzeki method can impose inequality constraints via linear couplings to auxiliary Lagrange-multiplier-like variables, rather than via dense all-to-all penalty interactions.
% This approach has notable advantages: it does not require introducing binary slack bits to represent unused capacity, and it avoids the need for huge penalty constants that skew the energy landscape.
Empirical studies have shown that enforcing constraints in this manner allows larger and more densely connected problems to be solved on current quantum annealers than the traditional penalty mapping would allow.
% In essence, the Ohzeki method enforces constraints exactly (in the limit of large transformation parameter) within the quantum annealing process, sidestepping the precision issues and resource overhead associated
% with penalty terms.
However, the step function (or other penalty function) cannot be incorporated directly in quantum annealers, resorting to solving a Lagrangian relaxation in the Ohzeki method.
This still requires extra effort to be solved, such as using the subgradient method to update the Lagrangian multipliers.

Encouraged by such developments, recent gate-based quantum algorithms have started to explore slack-free methods that more faithfully encode constraints while avoiding slack-variable
overhead~\cite{fischer-lagrangian,subgradient-1qbit,vqec,lagrangian-qubo}.
% For example, some variational algorithms introduce threshold directly into the cost Hamiltonian to sharply distinguish feasible from infeasible states.
Golden et al. replace the standard phase separator in QAOA with a threshold function that outputs 1 only for solutions above a certain objective value threshold~\cite{th-gm-qaoa}.
This effectively acts as a step-function filter, focusing the algorithm on feasible solutions.
On the other hand, Monta\~nez-Barrera et al. propose an unbalanced penalization technique that imposes a much larger penalty for violating an inequality constraint than
for satisfying it~\cite{ub-dwave,ub-penalty}.
This asymmetric penalty mimics a step function by heavily suppressing infeasible configurations without requiring extra slack qubits.
However, these methods also come with drawbacks. The threshold-based QAOA requires extra ancilla qubits to implement a quantum comparator, which will trigger the flag qubit if a constraint is violated.
The Grover mixer can also be expensive in terms of the number of gates in the circuit, as it requires an initialization of a superposition of all feasible states~\cite{grover-mixer,constraint-preserving-qaoa}.
For unbalanced penalization, the quadratic approximation of the exponential penalty can lead to false minima in the optimization landscape, i.e., the minimum in the unbalanced QUBO is not the minimum
in the original constrained problem.

% Such approaches aim to maintain a more faithful optimization landscape---one in which any departure from feasibility is heavily disfavored---thereby preserving the structure of the true constrained problem.
% By eliminating slack variables and using sharper penalty functions, they also significantly reduce the qubit count and circuit complexity required for problems like the knapsack.
% Early results are promising: for instance, the unbalanced penalization method successfully encoded the optimal solutions of knapsack and bin packing problems near the ground state,
% solving instances with up to 29 items on quantum hardware, compared to only 11 items using the conventional slack-variable approach.
% These developments indicate that step-function-inspired constraint enforcement is a viable and potentially superior route for gate-based quantum optimization,
% offering better scalability and solution quality than the traditional QUBO-with-slack formulation.

Building on these insights, this work introduces a slack-free VQE-based solver that leverages a custom penalty function to enforce the constraints.
% Instead of allocating slack qubits, we encode the knapsack’s weight constraint directly into the VQE cost function by using a custom penalty term that remains zero for any feasible solution and rises steeply
% when the weight limit is exceeded.
We use two examples of the penalty function: the exponential function and the Heaviside step function to solve the Multiple Knapsack problem (MKP).
In contrast of the previous methods, the penalty function acts on the constraint Hamiltonian itself and is evaluated directly using classical computers, breaking the restriction of keeping a QUBO formulation
in quantum annealers, as well as not requiring extra qubits in the threshold-based algorithms.
This formulation preserves the feasibility of low-energy states by construction without the need for fine-tuned penalty weights.
% In our implementation, the number of qubits required is dramatically reduced compared to a standard slack-variable encoding, since only the actual decision bits for items are needed.
Although requiring a complexity that scales exponentially with the number of terms in the constraint, 
the experimental results demonstrate that the proposed method attains a higher feasibility rate and optimality rate compared to the unbalanced penalization method.
Overall, the main contribution of this paper is to show that variational quantum algorithms can incorporate custom penalty functions, e.g., step function, to solve inequality-constrained problems like MKP
with increased feasibility and accuracy, and with fewer qubits, marking a promising advance in quantum approaches to constrained combinatorial optimization.

\section{Background}
\subsection{QUBO formulation of constrained optimization}
A constrained binary program is generally defined as
\begin{align}
    \min_x\ & f(x) \\ 
    \text{s.t.}\ & g_i(x) = 0,\quad i = 1,...,m, \\
    & h_j(x) \leq 0,\quad j= 1,...,k,
\end{align}
where $x\in\{0,1\}^n$ is an $n$-bit binary string that represents the variable of the problem. The aim of a solver is to find the solution bit string $x$ that minimizes
the objective $f(x)$ while keeping all the constraints $g_i(x)$ and $h_j(x)$ satisfied. Any solution $x$ that satisfies all the constraints is called
a feasible solution, and a solution that violates any of the constraints is called an infeasible solution. 

When solving an unconstrained problem (a problem that only has $f(x)$ without constraints) using variational quantum algorithms (VQA),
the objective is converted to the Ising model, i.e., the binary variables $x_i\in\{0,1\}$ are converted to the spin variables $z_i\in\{-1,1\}$
using a linear map $x_i=(1-z_i)/2$. Then, the spin variables are replaced by the Pauli-Z matrix to obtain the problem Hamiltonian $H$.
Therefore, the conversion can be simplified as
\begin{equation}
    x_i \rightarrow \frac{1-Z_i}{2}.
    \label{eqn:binary-to-spin}
\end{equation}
After the problem Hamiltonian is formulated, the expectation
$\expval{H}{\WF}$ is minimized with respect to some parameterized trial state $\ket{\WF}$ to obtain the solution to the original problem, where $\bm{\theta}=(\theta_1,\theta_2,...,\theta_n)$
is a collection of the variational parameters.
Since constraints cannot be encoded directly into quantum computers, a common way is to formulate it as an unconstrained problem by adding penalty terms into
the objective so that they penalize the infeasible solution. In most application cases, the objective $f(x)$ is either linear or quadratic, and the constraints
$g_i(x)$ and $h_j(x)$ are linear, so the resulting expression is a quadratic function of $x$. This is known as the
quadratic unconstrained binary optimization (QUBO) formulation. In QUBO, the equality constraints are formulated as $\lambda g(x)^2$, where
$\lambda$ is the penalty coefficient to decide how much to penalize. This is done so that if $g(x) \neq 0$, this term will return a large value that will increase
the objective that is to be minimized. The inequality constraints can also be formulated with the same reasoning. An inequality is converted to an equality with the
introduction of slack variables, i.e., $h(x)+y=0$, yielding the penalty term $\mu(h(x) + \sum_{k=0}^{N-1}2^ky_k)^2$,
where $y\geq 0$ is the slack variable in its binary representation $2^ky_k$. Hence, the resulting QUBO loss function becomes
\begin{multline}
    \mathcal{L}_{\text{slack}}(x) = f(x) + \sum_{i=1}^m \lambda_0^{(i)} g_i(x)^2 \\
    + \sum_{j=1}^k \lambda_1^{(j)}\left(h_j(x) + \sum_{l=0}^{N-1}2^ly_l \right)^2.
    \label{eqn:loss-slack}
\end{multline}
The slack variable formulation is not a good method when optimizing with VQAs as it introduces extra variables, meaning that extra qubits, which is a very
limited resource in current quantum devices, are needed to map the original problem to a QUBO. Moreover, many have reported that the optimization landscape 
of the slack QUBO formulation is not faithful to the objective of the original problem, i.e., lower loss in QUBO does not necessarily mean better objective in the
original problem due to different possible values that a slack variable can hold to satisfy the equality. % citations here
Therefore, it calls for the slack-free QUBO formulation to overcome the above drawbacks in variational quantum optimization.

\begin{figure}
    \centering
    \includegraphics[width=0.95\linewidth]{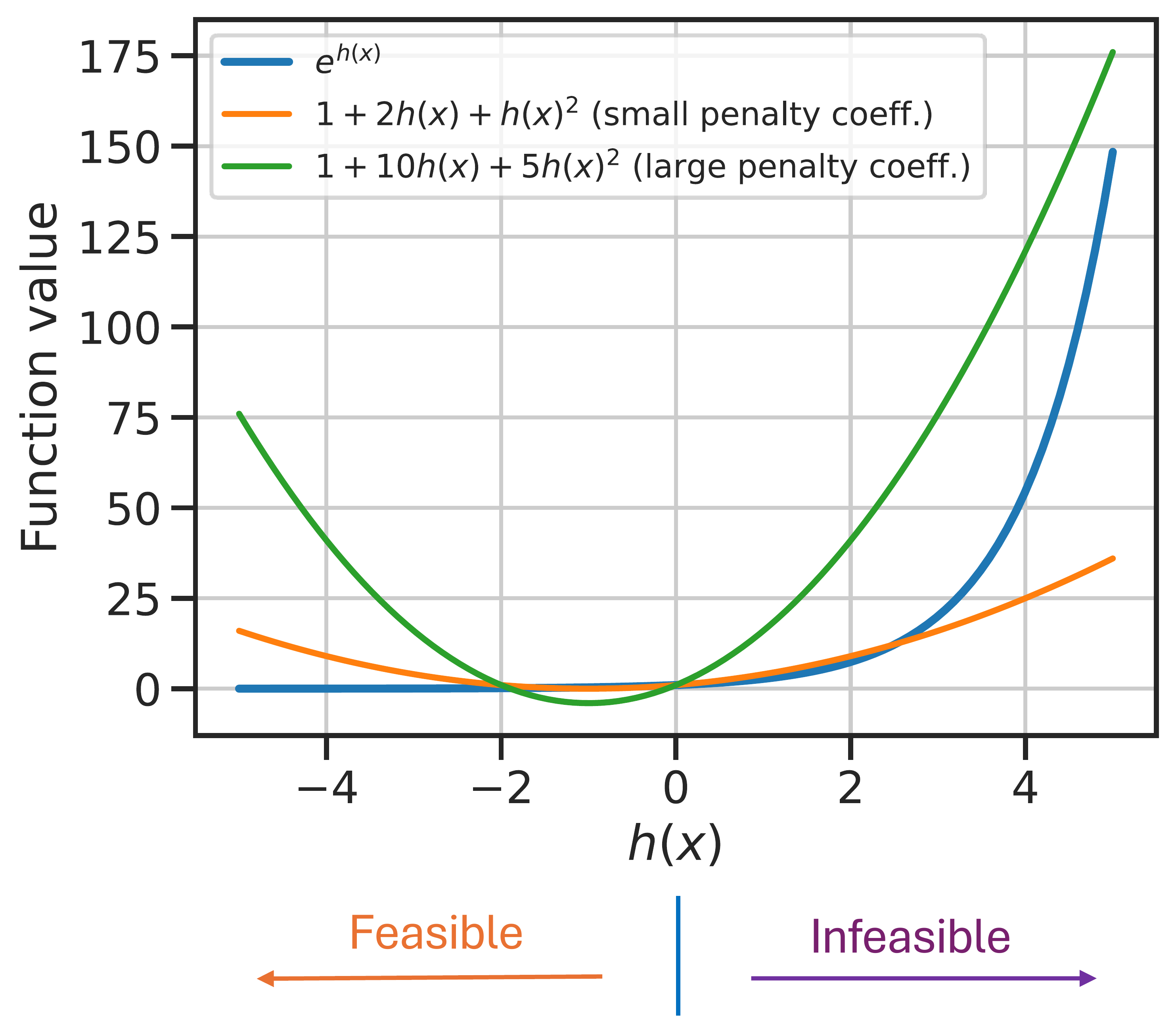}
    \caption{The effect of different penalty function. The penalty function is supposed to penalize (output large values) for infeasible solution $h(x)>0$ and output small values %
    for feasible solution $h(x)\leq 0$.}
    \label{fig:functions}
\end{figure}

There are several methods to formulate the constrained problem without slack variables, including Lagrangian relaxation~\cite{vqec}
and unbalanced penalization~\cite{ub-dwave,ub-penalty}. We focus on the unbalanced penalization method.
Unbalanced penalization eliminates the use of slack variables by replacing the inequality penalty term with an exponential function of the inequality
$e^{h(x)}$, so that the penalty term is small when $h(x)\leq 0$ (constraint satisfied) and exponentially large when $h(x)>0$ (constraint violated). 
However, adding exponential term will cause the loss function to be no longer quadratic, ruining the structure of QUBO. This causes difficulty in implementing
the exponential loss function in quantum annealers and also quantum observables in quantum circuits. Hence, the exponential term is approximated up to the second
order in~\cite{ub-dwave,ub-penalty} (using the Taylor expansion) to maintain the quadratic loss function: $e^{h(x)} \approx 1+h(x)+h(x)^2/2$.
The resulting QUBO loss function for the unbalanced penalization is
\begin{multline}
    \mathcal{L}_{\text{unb}}(x) = f(x) + \sum_{i=1}^m \lambda_0^{(i)} g_i(x)^2 \\
    + \sum_{j=1}^k \left[\lambda_1^{(j)} h_j(x) + \lambda_2^{(j)}h_j(x)^2 \right],
    \label{eqn:loss-ub}
\end{multline}
where $\lambda_{0,1,2}$ are the penalty coefficients assigned to tune the strength of each penalty term. 
If we assume a linear or quadratic $f(x)$, and linear $g(x)$ and $h(x)$, we can see that $\mathcal{L}_{\text{unb}}(x)$ is quadratic,
so it is still a QUBO and can be solved using a quantum annealer. However, the quadratic approximation of the exponential function has
its own caveats. When the magnitude of $h(x)$ becomes large, although the behavior of the quadratic curve works as desired (gives a large penalty)
for $h(x)>0$, it is not desirable for $h(x)\leq 0$ as the penalty also becomes large when it is supposed to be small
(as in the exponential curve, see Fig.~\ref{fig:functions}). Consequently, when the penalty coefficients are not tuned properly,
the loss function will have a false minimum, i.e., the minimum of $\mathcal{L}_{\text{unb}}(x)$ is not equal to the minimum of the original problem.
Therefore, we attempt to improve the formulation by considering the direct computation of the penalty function.

\subsection{Variational quantum eigensolver}
Variational quantum eigensolver (VQE) is a common method used in quantum computing to find the minimum energy of a given Hamiltonian.
The energy in this case corresponds to the objective of the QUBO loss function. The objective of each solution bit-string is encoded in the so-called
problem Hamiltonian. The problem Hamiltonian $H$ is constructed using the conversion mapping Eq.~(\ref{eqn:binary-to-spin}).
The minimum energy (eigenvalue) of $H$ is then searched using a parameterized trial state $\ket{\WF}$ (also known as an ansatz), where $\bm{\theta}$ is a list of
variational parameters to be trained.
Minimizing the expectation of $H$ with respect to $\ket{\WF}$ is equivalent to minimizing the QUBO loss function:
\begin{equation}
    \min_{\bm{\theta}} \expval{H}{\WF} = \min_x\mathcal{L}(x).
\end{equation}
The optimal $\bm{\theta}$ is usually found using classical optimizers. The trial state can be generated with, e.g., hardware-efficent ansatz (HEA)~\cite{hea,practical-hea},
which consists of alternating layers of rotational gates and entangling gates.

\section{Implementation of penalty function}
For simplicity, consider a binary minimization program with only linear inequality constraints:
\begin{align}
    \min_x\ & f(x) \\
    \text{s.t.}\ & Ax\leq b,
\end{align}
where $x$ and $b$ are vectors of length $n$ (number of variables) and $A$ is a $k\times n$ matrix containing the 
coefficients of $k$ constraints. Let $h_i(x)$ denote the $i$-th constraint:
\begin{equation}
    h_i(x) = \sum_{j=1}^n a_{ij}x_j - b_i \leq 0.
    \label{eqn:hix}
\end{equation}
We want to implement a penalty function on the inequality constraints $\xi:h_i(x)\rightarrow \mathbb{R}$,
such that the value of $\xi$ is large when the solution is infeasible $(h_i(x)>0)$ and small when the solution is feasible $(h_i(x)\leq 0)$.
A good choice of penalty function should separate the values returned by feasible solutions and infeasible solutions distinctively. Potential candidates
of the penalty function are the exponential function as suggested in~\cite{ub-penalty}, hyperbolic tangent function, the sigmoid function,
or the Heaviside step function. However, these functions are usually not polynomial in terms of $x$, so they are difficult to be implemented on quantum devices.

The loss function for the formulation with custom penalty function is 
\begin{equation}
    \mathcal{L}_{\text{custom}}(x) = f(x) + \sum_{i=1}^k \lambda_i\ \xi[h_i(x)].
    \label{eqn:loss-custom}
\end{equation}
Applying the conversion in Eq.~(\ref{eqn:binary-to-spin}), we have
\begin{equation}
    H = H_f + \sum_{i=1}^k \lambda_i\ {\xi}(H_i).
    \label{eqn:custom-ham}
\end{equation}
$H_f$ and $H_i$ are obtained by applying the linear map to $f(x)$ and $h_i(x)$ respectively.
Note that $h_i(x)$ are linear, so $H_i$'s are expressed in terms of only the sums of one-local Pauli-Z strings (only one $Z$ in each Pauli string).
Since $H_i$ is diagonal, the function ${\xi}$ acts directly on each diagonal element of $H_i$:
\begin{equation}
    \xi(H_i) =
    \begin{pmatrix}
    \xi(\mu_1) &&& \\ 
    & \xi(\mu_2) && \\ 
    && \ddots & \\ 
    &&& \xi(\mu_N)
    \end{pmatrix},
\end{equation}
where $\mu_j$ is the $j$-th eigenvalue of $H_i$ and $N=2^n$ ($n$ is the number of qubits in the system).
% For simplicity, we will use the notation $\xi(H_i)$ instead to represent the penalty function of $H_i$. 

The goal of variational quantum optimization is to minimize the expectation of the problem Hamiltonian $H$, hence we need to compute
\begin{equation}
    \expval{H} = \expval{H_f} + \sum_{i=1}^k \lambda_i\expval{\xi(H_i)}.
\end{equation}
$\expval{H}$ can be computed using measurements from the Pauli-Z observable, but $\expval{\xi(H_i)}$ cannot be computed in such a way
as it is a function of the sum of Pauli-Z strings. Hence, we need to compute $\expval{\xi(H_i)}$ separately using classical computers.
In most cases, the expectation is calculated as
\begin{equation}
    \expval{\xi(H_i)} = \sum_{x=0}^{N-1} \xi(\mu_x)p_x.
    \label{eqn:exp-penalty}
\end{equation}
$p_x$ is probability of sampling the basis state $\ket{x}$ (corresponding to the solution bit-string $x$) with eigenvalue $\xi(\mu_x)$.
Note that $x$ can also represent the decimal form of the solution bit-string, as we do not distinguish the decimal and binary forms here.
For shot-based quantum simulation,
$p_x$ can be obtained by sampling the quantum circuit. Two quantities are required to compute the expectation: $\mu_x$ (so that we can compute $\xi(\mu_x)$)
and $p_x$. The probabilities $p_x$ can be obtained by sampling quantum circuits, so what is left is the eigenvalue $\mu_x$.

Since most of the problems in applications only has linear constraints, we consider only the constraint Hamiltonian as the sum of one-local Pauli-Z's.
Applying the linear transformation to convert the binary variables to Pauli-Z on Eq.~(\ref{eqn:hix}), $H_i$ can be explicitly written as
\begin{align}
    H_i & = \sum_{j=1}^n a_{ij}\left(\frac{1-Z_j}{2}\right) - b_i \\
    & = C_i - \frac{1}{2} \tilde{H}_i. \label{eqn:ci-hi}
\end{align}
Eq.~(\ref{eqn:ci-hi}) consists of two terms: $C_i = \tilde{A}_i - b_i$, where $\tilde{A}_i = \sum_{j=1}^n a_{ij}/2$ is half of the sum of all coefficients in constraint $i$ and; $\tilde{H}_i = \sum_{j=1}^n a_{ij}Z_j$ encodes the eigenvalues of different combination of bit-strings (the eigenstates).
Here, we let $\tilde{H}$ be a general expression in the form of a sum of one-local Pauli-Z:
\begin{equation}
    \tilde{H} = \sum_{j=1}^n c_jZ_j
    \label{eqn:weighted-sum}
\end{equation}
By using some of the properties of the Pauli-Z matrix, the eigenvalues of $\tilde{H}$ can be inferred.
\begin{lemma}
    Given $Z_j$ is a Pauli-Z operator acting on qubit $j$, the eigenvalue of $Z_j$ that corresponds to the eigenstate $\ket{x}$, i.e. $\expval{Z_j}{x}$,
    only depends on the $j$-th bit of $x$:
    \begin{equation}
        \expval{Z_j}{x} =
        \begin{dcases*}
            1 & if $x_j = 0$, \\
            -1 & if $x_j = 1$.
        \end{dcases*}
        \label{eqn:zi-cases}
    \end{equation}
    \label{lemma:eigval-zi}
\end{lemma}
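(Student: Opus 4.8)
The plan is to exploit the tensor-product structure of the computational basis together with the diagonal action of the single-qubit Pauli-Z operator. First I would recall the elementary fact that, in the computational basis, $Z$ acts diagonally as $Z\ket{0}=\ket{0}$ and $Z\ket{1}=-\ket{1}$, which I would write compactly as $Z\ket{b}=(-1)^b\ket{b}$ for $b\in\{0,1\}$. Everything else follows from this one identity, so I would establish it up front.

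Next I would decompose the $n$-qubit basis state as the tensor product $\ket{x}=\ket{x_1}\otimes\cdots\otimes\ket{x_n}$ of its single-qubit factors, and write the operator as $Z_j = I\otimes\cdots\otimes Z\otimes\cdots\otimes I$, with $Z$ occupying the $j$-th slot and identity acting on every other qubit. Applying $Z_j$ to $\ket{x}$ then leaves all factors untouched except the $j$-th, where the single-qubit rule yields the scalar $(-1)^{x_j}$. Hence $Z_j\ket{x}=(-1)^{x_j}\ket{x}$, which shows that $\ket{x}$ is an eigenstate of $Z_j$ with eigenvalue $(-1)^{x_j}$.

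Finally, I would evaluate the expectation directly: $\expval{Z_j}{x}=\bra{x}Z_j\ket{x}=(-1)^{x_j}\braket{x}{x}=(-1)^{x_j}$, invoking the normalization $\braket{x}{x}=1$. Reading off the two cases, this equals $1$ when $x_j=0$ and $-1$ when $x_j=1$, which is precisely the piecewise expression of Eq.~(\ref{eqn:zi-cases}), and by construction it depends on nothing but the single bit $x_j$.

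There is no genuine obstacle in this argument, since the result is an immediate consequence of the diagonality of $Z$ in the computational basis. The only point requiring care is the tensor-product bookkeeping: I must make explicit that $Z_j$ acts as the identity on every qubit other than $j$, so that the resulting eigenvalue isolates exactly the $j$-th bit and no dependence on the other $x_l$ ($l\neq j$) can creep in.
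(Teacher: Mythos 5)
Your proof is correct and takes essentially the same approach as the paper: both rest on the diagonality of $Z$ in the computational basis together with the tensor-product structure of $Z_j$. The only cosmetic difference is that the paper expands the operator $Z_j$ into sums of projectors $\ketbra{0}\pm\ketbra{1}$ and reads off the sign on each basis state, whereas you apply the single-qubit rule $Z\ket{b}=(-1)^b\ket{b}$ factorwise to $\ket{x}$ and then take the (normalized) expectation --- the same computation viewed from the state side rather than the operator side, and your compact form $(-1)^{x_j}$ even anticipates the paper's Lemma~2.
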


\begin{proof}
    Given
    \begin{align}
        I & = \ketbra{0} + \ketbra{1} \\
        Z & = \ketbra{0} - \ketbra{1}.
    \end{align}
    The Pauli string notation $Z_j$ denotes a Pauli-Z operator at position $j$:
    \begin{align}
        Z_j & \equiv I_1\otimes I_2\otimes \cdots \otimes Z_j\otimes \cdots\otimes I_n \\
        \begin{split}
        & = \left(\ketbra{0}_1 + \ketbra{1}_1 \right) \otimes \left(\ketbra{0}_2 + \ketbra{1}_2 \right) \otimes \cdots \\
        &\quad \otimes \left(\ketbra{0}_j - \ketbra{1}_j\right) \otimes \cdots \otimes \left(\ketbra{0}_n + \ketbra{1}_n\right).
        \end{split}
    \end{align}
    The indices label the positions of the $I$ or $Z$ operators. After tensor product expansion, the term $\ketbra{1}_j$ in $Z_j$ will give a
    coefficient of $-1$ to the computational basis states that has $\ket{1}$ on position $j$, while the other terms will stay the same by having a 
    coefficient of $1$, hence proving the lemma.
\end{proof}

We then generalize to the weighted sum of one-local Pauli-Z using Lemma~\ref{lemma:eigval-zi}. Consider a 3-qubit system with 
$\tilde{H} = c_1Z_1 + c_2Z_2$, the eigenvalues of $\tilde{H}$ are listed in Table~\ref{tab:eig-htilde}.
Hence, we have the following lemmas.
\begin{table}
    \centering
    \caption{The eigenvalues of a weighted sum of one-local Pauli-Z $\tilde{H} = c_1Z_1+c_2Z_2$ in a 3-qubit system.}
    \begin{tabular}{cc}
        \toprule
        Eigenstate & Eigenvalue of $c_1Z_1+c_2Z_2$ \\
        \midrule
        $\ket{000}$ & $c_1+c_2$ \\
        $\ket{001}$ & $c_1+c_2$ \\
        $\ket{010}$ & $c_1-c_2$ \\
        $\ket{011}$ & $c_1-c_2$ \\
        $\ket{100}$ & $-c_1+c_2$ \\
        $\ket{101}$ & $-c_1+c_2$ \\
        $\ket{110}$ & $-c_1-c_2$\\
        $\ket{111}$ & $-c_1-c_2$\\
        \bottomrule
    \end{tabular}
    \label{tab:eig-htilde}
\end{table}

\begin{lemma}
    The eigenvalues of a weighted sum of one-local Pauli-Z [Eq.~(\ref{eqn:weighted-sum})]
    are the sum of the coefficient times $1$ or $-1$ depending on the corresponding bit:
    \begin{equation}
        \Expval*{\sum_{j=1}^n c_jZ_j}{x} = \sum_{j=1}^{n} (-1)^{x_j}c_j.
        \label{eqn:lemma2}
    \end{equation}
    \label{lemma:wsumz}
\end{lemma}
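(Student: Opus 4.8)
The plan is to obtain the statement as an immediate corollary of Lemma~\ref{lemma:eigval-zi} together with the linearity of the expectation value. First I would note that each computational basis state $\ket{x}$ is a simultaneous eigenstate of all the single-qubit operators $Z_1,\dots,Z_n$, and hence of their weighted sum $\tilde{H}=\sum_{j=1}^n c_j Z_j$. This observation is what justifies speaking of an \emph{eigenvalue}: because $\ket{x}$ diagonalizes $\tilde{H}$, the quantity $\expval{\tilde{H}}{x}$ is not merely an expectation value but is exactly the eigenvalue associated with $\ket{x}$.

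Next I would expand the inner product by linearity, pulling the constants outside the sum:
\begin{equation}
    \Expval*{\sum_{j=1}^n c_jZ_j}{x} = \sum_{j=1}^n c_j\,\expval{Z_j}{x}.
\end{equation}
This step uses only the bilinearity of $\bra{x}\cdot\ket{x}$ over the finite sum. I would then substitute the value of each factor $\expval{Z_j}{x}$ supplied by Lemma~\ref{lemma:eigval-zi}, namely $+1$ when $x_j=0$ and $-1$ when $x_j=1$. The final step is to recognize that this two-case rule is written compactly as $(-1)^{x_j}$, since $(-1)^0=1$ and $(-1)^1=-1$; substituting gives $\sum_{j=1}^n (-1)^{x_j} c_j$, which is the claimed formula.

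I do not anticipate any genuine obstacle, as the result follows directly once Lemma~\ref{lemma:eigval-zi} is in hand. The only point deserving explicit mention is the first one, that $\ket{x}$ diagonalizes $\tilde{H}$, so that the identification of the expectation with an eigenvalue is legitimate and the per-qubit contributions add independently. As a concrete check, the formula reproduces every entry of Table~\ref{tab:eig-htilde}: for the two-term Hamiltonian $c_1 Z_1 + c_2 Z_2$ it assigns $(-1)^{x_1}c_1 + (-1)^{x_2}c_2$, matching the listed eigenvalues across all eight basis states.
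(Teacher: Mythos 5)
Your proposal is correct and follows essentially the same route as the paper's proof: linearity of the expectation value followed by substitution of $\expval{Z_j}{x}=(-1)^{x_j}$ from Lemma~\ref{lemma:eigval-zi}. Your additional remark that $\ket{x}$ is a simultaneous eigenstate of all the $Z_j$, which justifies reading the expectation as an eigenvalue, is a point the paper leaves implicit, but it does not change the argument.
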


\begin{proof}
    The expectation follows linearity:
    \begin{equation}
        \Expval*{\sum_{j=1}^n c_jZ_j}{x} = \sum_{j=1}^n c_j \expval{Z_j}{x}.
    \end{equation}
    From Lemma~\ref{lemma:eigval-zi}, we can deduce
    \begin{equation}
        \expval{Z_j}{x} = (-1)^{x_j}.
    \end{equation}
    Hence,
    \begin{equation}
        \Expval*{\sum_{j=1}^n c_jZ_j}{x} = \sum_{j=1}^n (-1)^{x_j}c_j.
    \end{equation}
\end{proof}

\begin{lemma}
    If Pauli $Z_j$ is not in the weighted sum, i.e., $c_j=0$, then bit $j$ of the eigenstate $\ket{x}$ does not affect the eigenvalue of the weighted sum.
    \label{lemma:invariance}
\end{lemma}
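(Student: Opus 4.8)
The plan is to obtain this as an immediate corollary of Lemma~\ref{lemma:wsumz}, whose closed-form expression already exhibits the eigenvalue as a bit-by-bit sum over the positions of $\ket{x}$. First I would invoke Eq.~(\ref{eqn:lemma2}) to write the eigenvalue of the weighted sum as $\sum_{l=1}^{n}(-1)^{x_l}c_l$, renaming the dummy summation index to $l$ so that it does not collide with the fixed index $j$ named in the hypothesis. This separation lets me isolate the single term whose coefficient is $c_j$.

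Next I would examine the contribution of position $j$ on its own. Its additive contribution to the total is the product $(-1)^{x_j}c_j$. Under the hypothesis $c_j = 0$, this product equals zero whether $x_j = 0$ or $x_j = 1$, since $(-1)^{x_j}$ is merely $\pm 1$ and either value multiplied by zero vanishes. Hence the term indexed by $j$ drops out of the sum entirely, leaving $\sum_{l\neq j}(-1)^{x_l}c_l$.

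Finally, I would observe that this remaining sum carries no dependence whatsoever on $x_j$, so flipping bit $j$ between $0$ and $1$ leaves the eigenvalue unchanged. This is exactly the claim that bit $j$ does not affect the eigenvalue of the weighted sum.

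As for the main obstacle, there is none of real substance: the statement is a direct specialization of the explicit formula in Lemma~\ref{lemma:wsumz}. The only point demanding any care is notational, namely keeping the fixed index $j$ of the hypothesis distinct from the running summation variable, which the renaming to $l$ resolves cleanly.
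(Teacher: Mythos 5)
Your proposal is correct and follows essentially the same route as the paper's proof: both invoke the closed-form expression of Lemma~\ref{lemma:wsumz} and observe that when $c_j=0$ the term $(-1)^{x_j}c_j$ vanishes for either value of $x_j$, so the eigenvalue is independent of bit $j$. Your write-up is in fact slightly tidier than the paper's (which additionally gestures at Table~\ref{tab:eig-htilde}), since you explicitly isolate the $j$-th term and note the remaining sum carries no $x_j$ dependence.
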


\begin{proof}
    From Eq.~(\ref{eqn:lemma2}), if $c_j=0$, then $c_j$ does not contribute to the sum in the RHS of the equation.
    Observing the values in Table~\ref{tab:eig-htilde}, $c_j$ contributes different signs to the sum based on the value of $j$-th bit ($+1$ for $x_j=0$
    and $-1$ for $x_j=1$). If $c_j=0$, the sign has no effect on the sum ($\pm 0$ means the same), hence proving the lemma.
\end{proof}

Lemma~\ref{lemma:invariance} implies that only the measurements of the non-zero terms in $\tilde{H}$ are needed to obtain the eigenvalues.
Hence, the complexity to query the eigenvalues of $\tilde{H}$ is $\BigO(2^t)$, where $t$ is the number of non-zero Pauli-Z in $\tilde{H}$,
as there are at most $2^t$ distinct eigenvalues. In shot-based devices, measurements are taken only for the non-zero Pauli terms in $\tilde{H}$
to obtain probabilities for $2^t$ different eigenstates. In statevector simulation, the measurement result (a statevector)
is obtained with the size of $2^n$ ($n$ is the number of qubits), then the qubits that are not involved in $\tilde{H}$ can be traced out by taking
the partial trace on the statevector to reduce it to $2^t$ $(t< n)$. 
Combining Lemma~\ref{lemma:eigval-zi}, \ref{lemma:wsumz}, and \ref{lemma:invariance}, we arrive at the following:
\begin{theorem}
    The eigenvalue $\mu_x$ (corresponds to the eigenstate $\ket{x}$) of the $i$-th linear constraint Hamiltonian $H_i$ can be computed using
    \begin{align}
        \mu_x \equiv \expval{H_i}{x} & = C_i-\frac{1}{2}\expval{\tilde{H}_i}{x} \\
        & = C_i - \frac{1}{2}\sum_{j=1}^n (-1)^{x_j}a_{ij},
    \end{align}
    with $\BigO(2^t)$ complexity to query all the eigenvalues, where $t$ is the number of non-zero Pauli-Z in $\tilde{H}_i$.
    Then, the expectation of the penalty function, $\expval{\xi(H_i)}$, can be computed
    using Eq.~(\ref{eqn:exp-penalty}).
    \label{theorem:final}
\end{theorem}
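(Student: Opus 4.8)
The plan is to assemble the three preceding lemmas into a single closed-form expression for the eigenvalue, then settle the complexity claim by a counting argument, and finally recall the sampling formula for the penalty expectation. The crucial structural observation is that $H_i$ is diagonal in the computational basis, so every basis state $\ket{x}$ is automatically an eigenstate; this means the expectation $\expval{H_i}{x}$ coincides with the eigenvalue $\mu_x$, and I can work entirely with expectation values rather than diagonalizing a matrix explicitly.

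First I would substitute the decomposition from Eq.~(\ref{eqn:ci-hi}), writing $H_i = C_i - \tfrac{1}{2}\tilde{H}_i$, and apply linearity of the expectation to split it into a constant piece and a Pauli piece:
\begin{equation}
    \expval{H_i}{x} = C_i\braket{x}{x} - \frac{1}{2}\expval{\tilde{H}_i}{x}.
\end{equation}
The constant term reduces to $C_i$ because $\ket{x}$ is normalized, $\braket{x}{x}=1$. For the second term I would invoke Lemma~\ref{lemma:wsumz} with the identification $c_j = a_{ij}$, which gives $\expval{\tilde{H}_i}{x} = \sum_{j=1}^n (-1)^{x_j} a_{ij}$. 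Combining the two immediately yields the claimed formula $\mu_x = C_i - \tfrac{1}{2}\sum_{j=1}^n (-1)^{x_j} a_{ij}$, so the eigenvalue portion of the theorem is essentially a one-line corollary of the lemma already in hand.

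Next I would address the $\BigO(2^t)$ complexity, which is where I expect the only real subtlety to lie. The key tool is Lemma~\ref{lemma:invariance}: any bit $x_j$ whose coefficient $a_{ij}$ vanishes leaves the sum $\sum_j (-1)^{x_j} a_{ij}$ unchanged, so $\mu_x$ depends only on the $t$ bits attached to non-zero coefficients. Consequently the map $x \mapsto \mu_x$ takes at most $2^t$ distinct values, and enumerating them amounts to evaluating the length-$t$ signed sum over these $2^t$ sign patterns. The care needed here is purely combinatorial bookkeeping rather than analysis: I must phrase the bound as counting the \emph{distinct} eigenvalues obtained after collapsing the $N=2^n$ basis states into equivalence classes indexed by the $t$ relevant bits, so that the factor $2^t$ — and not $2^n$ — is correctly justified through the invariance lemma.

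Finally, with the eigenvalues $\mu_x$ known and the probabilities $p_x$ obtained by sampling the ansatz circuit, the penalty expectation follows directly by substituting $\xi(\mu_x)$ into Eq.~(\ref{eqn:exp-penalty}), giving $\expval{\xi(H_i)} = \sum_{x} \xi(\mu_x)\, p_x$. I do not anticipate a genuine mathematical obstacle, since all the analytic content already resides in Lemmas~\ref{lemma:eigval-zi}, \ref{lemma:wsumz}, and \ref{lemma:invariance}; the theorem is best understood as a consolidation of those results into a computationally explicit recipe, with the complexity statement being the one place demanding a precise justification.
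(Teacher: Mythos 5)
Your proposal is correct and follows essentially the same route as the paper's proof: substitute the decomposition $H_i = C_i - \tfrac{1}{2}\tilde{H}_i$ from Eq.~(\ref{eqn:ci-hi}), apply linearity together with Lemma~\ref{lemma:wsumz} (with $c_j = a_{ij}$) to get the closed-form eigenvalue, and invoke Lemma~\ref{lemma:invariance} to bound the number of distinct eigenvalues by $2^t$ and hence the query complexity by $\BigO(2^t)$. Your added remarks --- that diagonality of $H_i$ makes every computational basis state an eigenstate so the expectation equals the eigenvalue, and that $\braket{x}{x}=1$ collapses the constant term --- are justifications the paper leaves implicit, not a different argument.
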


\begin{proof}
    \begin{align}
        \mu_x & \equiv \expval{H_i}{x} \\
        & = \Expval*{C_i - \frac{1}{2}\sum_{j=1}^n a_{ij}Z_j}{x}\quad \text{[Eq.~(\ref{eqn:ci-hi})]}\\
        & = C_i - \frac{1}{2}\Expval*{\sum_{j=1}^n a_{ij}Z_j}{x} \\
        & = C_i - \frac{1}{2}\sum_{j=1}^n (-1)^{x_j}a_{ij}\quad \text{[Eq.~(\ref{eqn:lemma2})]}.
    \end{align}
    From Lemma~\ref{lemma:invariance}, we know that for a given qubit $j$, if $c_j=0$, then it does
    not contribute to the eigenvalue of the weighted sum.
    Therefore, there are only at most $2^t$ distinct eigenvalues in $\tilde{H}_i$, where $t$ is the number of non-zero Pauli-Z in $\tilde{H}_i$.
    Hence, the complexity to query all the eigenvalues in $\tilde{H}_i$ is $\BigO(2^t)$.
\end{proof}

\section{Experimental settings}
We use the Multiple Knapsack problem (MKP) as an example for our method.
Given $K$ knapsacks with limited capacities $W_i$ and $L$ items with respective values $v_j$ and weights $w_j$,
MKP seeks to assign items to the knapsacks such that the combination maximizes the value of the items carried,
such that the weights of the items $w_j$ do not exceed the capacity of the knapsack $W_i$, and each knapsack can contain only one item.
The binary decision variables are denoted as $x_{ij}\in\{0,1\}$, where $x_{ij}=1$ if item $j$ is placed in knapsack $i$ (and 0 otherwise).
It is formally defined as
\begin{align}
    \max_x & \sum_{i=1}^K \sum_{j=1}^L v_jx_{ij} \\
    \text{s.t.} & \sum_{j=1}^L w_jx_{ij} \leq W_i,\quad i = 1,...,K \\
    & \sum_{i=1}^K x_{ij}\leq 1,\quad j = 1,...,L.
\end{align}
where \\
\begin{tabular}{lll}
    $x_{ij}$& : & Decision variable to represent whether item $j$ is in \\
    & & knapsack $i$. \\
    $v_j$& : & Value for item $j$. \\
    $w_j$& : & Weight of item $j$. \\
    $W_i$& : & Capacity of knapsack $i$.
\end{tabular} \\

Note that any maximization problem can be converted to a minimization problem by taking the negative of the objective.
MKP is formulated into Eq.~(\ref{eqn:custom-ham}) and is solved using VQE. First, the constraints of MKP can be expressed as
\begin{align}
    h_i^{(1)}(x) & = \sum_{j=1}^L w_jx_{ij} - W_i \leq 0,\quad i = 1,...,m \label{eqn:mkp-const1} \\
    h_j^{(2)}(x) & = \sum_{i=1}^K x_{ij} - 1 \leq 0, \quad j = 1,...,n. \label{eqn:mkp-const2}
\end{align}
The problem can be formulated in the form of Eq.~(\ref{eqn:loss-custom}) as
\begin{equation}
    \mathcal{L}(x) = -\sum_{i=1}^K \sum_{j=1}^L v_jx_{ij} + \sum_{i=1}^{k} \lambda_i\ \xi[h_i(x)].
    \label{eqn:custom-mkp}
\end{equation}
Here, $h_i(x)$ includes both types of MKP constraints in Eq.~(\ref{eqn:mkp-const1}) and (\ref{eqn:mkp-const2}), and $k=K+L$.
Then, it is converted to the problem Hamiltonian in the form of Eq.~(\ref{eqn:custom-ham}) to be solved using VQE.

We solve 78 instances of MKP with at most 3 knapsacks and 4 items. After formulating to the custom loss function in Eq.~(\ref{eqn:custom-mkp}), we require 9 to 12 qubits to encode the problem as quantum states.
For VQE, we use a single layer hardware-efficient ansatz (HEA) to generate the trial state. We use a HEA that consists of parameterized rotational-Y $R_Y(\theta)$ gates
and unparameterized controlled-Z $CZ$ with linear entanglement ($CZ$ is placed between all adjacent qubits). This is a typical structure for HEA.
Note that we can no longer use a Hamiltonian-dependent ansatz under this setting, because there are terms with $\xi(H_i)$ and deciding the gates for the unitary $e^{-i\theta\xi(H_i)}$ might
be hard (unless it can be converted to Pauli sums, discussed in Sec.~\ref{sec:discussion}). The ansatz circuit is simulated using Qiskit statevector simulation~\cite{qiskit2024}. 
The optimization is done using the L-BFGS-B optimizer provided by SciPy~\cite{l-bfgs-b,scipy}, with \verb+maxfun+ (maximum number of function evaluations) and \verb+maxiter+ (maximum number of iterations) of 15000 each,
and \verb+ftol+ (function value tolerance) of $2.22\times 10^{-15}$.
The hyperparameters for the optimizer is set based on the default value provided in the Qiskit interface~\cite{bfgs-qiskit}.
Each instance of MKP has 3 random initializations (trials) of the parameters $\bm{\theta}$.

We use two types of penalty function for our experiment: the exponential function and the Heaviside step function.
For the exponential penalty, we formulate the loss function as
\begin{equation}
    \mathcal{L}_{\text{exp}}(x) = -\sum_{i=1}^K \sum_{j=1}^L v_jx_{ij} + \lambda_1 \sum_{i=1}^k e^{\lambda_2 h_i(x)}.
    \label{eqn:loss-exp}
\end{equation}
Instead of having individual penalty coefficients for each constraint, as such in Eq.~(\ref{eqn:custom-mkp}), we give only two penalty coefficients $\lambda_1$ and $\lambda_2$ for simplicity.
Similarly, the loss function for step penalty is 
\begin{equation}
    \mathcal{L}_\text{step}(x) = -\sum_{i=1}^K \sum_{j=1}^L v_jx_{ij} + \lambda\sum_{i=1}^k \Theta[h_i(x)],
    \label{eqn:loss-step}
\end{equation}
where $\Theta(\cdot)$ is the Heaviside step function:
\begin{equation}
    \Theta[h_i(x)] = 
    \begin{dcases*}
        1 & if $h_i(x) > 0$, \\
        0 & if $h_i(x) \leq 0$.
    \end{dcases*}
\end{equation}

It is important to show that there is a clear line of separation between feasible and infeasible solutions in Eq.~(\ref{eqn:loss-exp}) and (\ref{eqn:loss-step}), provided
that the penalty coefficient $\lambda$ is set correctly.
A typical method to use is the upper bound method~\cite{penalty-weights}, which finds the upper bound of the magnitude in the objective $f(x)$ by assigning $x_i=1$ for all $x_i$'s.
For MKP, it is the total value of all the items times the number of knapsacks $K$:
\begin{equation}
    \lambda_{\text{UB}} = K \sum_{j=1}^L v_j.
\end{equation}
Therefore, if the penalty $\lambda$ (or $\lambda_1$ in $\mathcal{L}_\text{exp}$) is set to $\lambda>\lambda_\text{UB}$, we can guarantee that the infeasible solutions will
have an objective at least as large as $\lambda_\text{UB}$ to offset them.

\section{Results}
We use three metrics to evaluate the effectiveness of our method: feasibility rate, optimality rate, and mean optimality gap.
The feasibility rate is the number of feasible solutions obtained out of the total number of instances solved.
The optimality rate is the number of optimal solutions obtained out of the total number of instances.
Feasible solution is a solution (sampled out of the optimal quantum circuit) that satisfies all constraints.
Optimal solution is a solution that is feasible and has the minimum (maximum) objective in the original problem.
Optimality gap is defined as the gap between the objective given by VQE and the optimal objective:
\begin{equation}
    \text{opt. gap} = 1 - \frac{C_\text{VQE}}{C_\text{opt}}.
\end{equation}
Lower optimality gap means better performance of VQE. 

\begin{table}
    \centering
    \caption{Feasiblity rate, optimality rate, and mean optimality gap for MKP modeled using the exponential penalty function with varied penalty coefficients $(\lambda_1,\lambda_2)$. %
    The red fonts show the highest value among different methods.}
    \begin{tabular}{cccc}
        \toprule
        Method / & Feasibility & Optimality & Mean \\
        Penalties $(\lambda_1,\lambda_2)$ & rate (\%) & rate (\%) & optimality gap\\
        \midrule
        Quadratic penalty~\cite{ub-penalty} & 77 & \bfred{40} & \bfred{0.067} \\
        \midrule
        (1, 3) & 27 & 21 & -0.287 \\
        (10, 3) & 80 & 11 & 0.218 \\
        (50, 3) & \bfred{99} & 0 & 0.995 \\
        \midrule
        (1, 5) & 48 & 22 & \bfred{0.067} \\
        (10, 5) & 69 & 13 & 0.228 \\
        (50, 5) & 82 & 1 & 0.564 \\
        \midrule
        (1, 10) & 46 & 5 & 0.224 \\
        (10, 10) & 46 & 2 & 0.266 \\
        (50, 10) & 46 & 2 & 0.274 \\
        \bottomrule
    \end{tabular}    
    \label{tab:exp-results}
\end{table}

Table~\ref{tab:exp-results} shows the results for the exponential penalty function. The performance under different penalty coefficients $(\lambda_1, \lambda_2)$ in Eq.~(\ref{eqn:loss-exp}) is evaluated.
A general trend shows that there is a trade-off between feasibility and optimality as $\lambda_1$ increases. As $\lambda_1$ increases, the feasibility rate increases, implying that the constraints are being enforced.
However, the decrease in optimality rate hints that the objective might have shifted by small values, as the constraints give different values for different values of the binary variables.
The feasibility rate achieves the highest of 99\% for the exponential function, whereas the quadratic penalty (unbalanced) still has the highest optimality rate of 40\%.
In terms of the mean optimality gap, the exponential penalty achieves 0.067, which is on par with the quadratic penalty. 
The negative optimality gap in $(1,3)$ arises from infeasible solutions but has a larger objective than the optimal objective.

\begin{figure}
    \centering
    \includegraphics[width=0.95\linewidth]{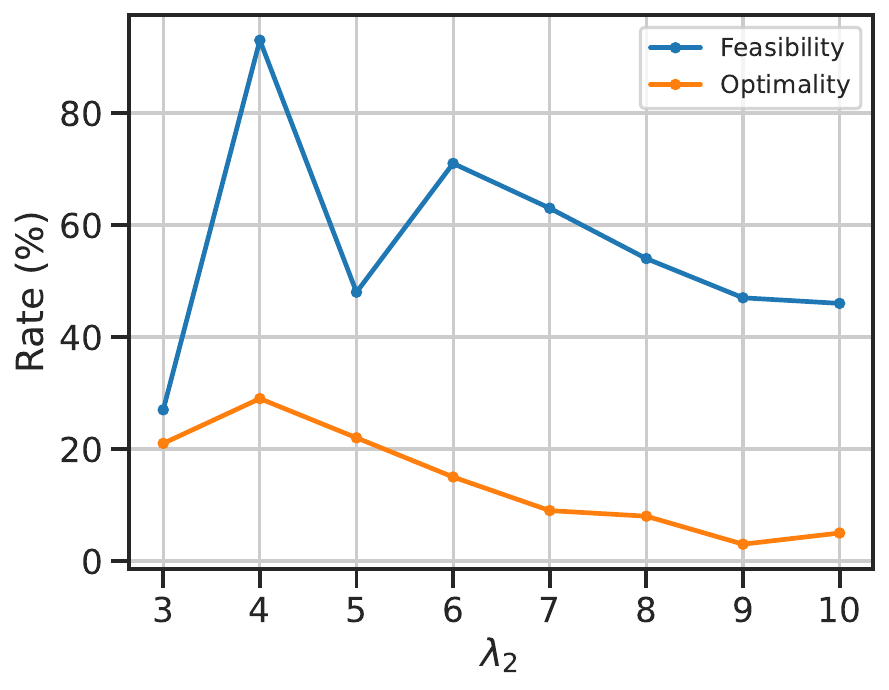}
    \caption{The variation of feasibility and optimality rate with the penalty coefficient $\lambda_2$ of the exponential penalty function. $\lambda_1$ is fixed to 1 in this case.}
    \label{fig:exp-lambda2}
\end{figure}

Fig.~\ref{fig:exp-lambda2} shows the change in the feasibility and optimality rate as the second penalty coefficient $\lambda_2$ is varied ($\lambda_1$ is fixed to 1).
As $\lambda_2$ increases, both the feasibility and optimality rate peak at $\lambda_2=4$, then decreases for larger values of $\lambda_2$.
This can be due to the bloated values of the penalty term, since $\lambda_2$ is in the exponent.

\begin{table}
    \centering
    \caption{Comparison of different methods: normal penalty with slack variables, unbalanced quadratic penalty, and our method with step penalty, on different metrics: feasibility rate, optimality rate, and %
    optimality gap. The red fonts show the highest value among different methods.}
    \begin{tabular}{cccc}
        \toprule
        Type of & Feasibility & Optimality & Mean \\
        penalty & rate (\%) & rate (\%) & optimality gap\\
        \midrule
        Normal (slack) & 63 & 29 & 0.139 \\
        Quadratic~\cite{ub-penalty} & 77 & 40 & \bfred{0.067} \\
        Step (ours) & \bfred{96} & \bfred{53} & 0.087 \\
        \bottomrule
    \end{tabular}
    \label{tab:step-results}
\end{table}

Since the results for the exponential penalty are not satisfying, we attempt to model the penalty function with a step function. 
The step function has a much clearer cut in the values of $h(x)$, as all infeasible solutions will add a penalty of $\lambda$ to the loss function, and feasible solutions will have zero contribution,
regardless of the values of $h(x)$. Table~\ref{tab:step-results} shows the results for MKP modeled with step penalty. We compare three types of penalty: normal formulation with slack variables in Eq.~(\ref{eqn:loss-slack}),
unbalanced quadratic penalty in Eq.~(\ref{eqn:loss-ub}), and the step penalty in Eq.~(\ref{eqn:loss-step}). The coefficient for the step penalty is set to $\lambda=50$.
It can be seen that our step penalty method shows the highest feasibility and optimality rate, despite having a slightly higher optimality gap than the quadratic penalty. 

\section{Discussion}\label{sec:discussion}
% discuss possibility of representing step function as Pauli sums
% separability of feas in custom penalty
Although the step penalty shows better performance (in feasibility and optimality rate) than the quadratic unbalanced penalty and the normal penalty with slack variables,
its computational cost scales exponentially with the number of terms in a single constraint (Theorem~\ref{theorem:final}).
One possible way to work this out is to express the penalty function $\xi(H_i)$ in terms of Pauli sums.
For example, the stepped constraint Hamiltonian $\Theta(H_i)$ produces a matrix with eigenvalues of only 0's and 1's, and this can be decomposed into weighted sums of Pauli-Z.
Consider a 2-qubit example with
\begin{align}
    \Theta(H_i) & = 
    \begin{pmatrix}
        1 &&& \\
        & 0 && \\
        && 0 & \\
        &&& 0
    \end{pmatrix}\label{eqn:delta-ex} \\
    & = c_0I + c_1 Z_1 + c_2 Z_2 + c_3 Z_1Z_2. \label{eqn:weighted-sum-2qb}
\end{align}
By expanding the RHS of (\ref{eqn:weighted-sum-2qb}), the Pauli coefficients can be found by solving a system of linear equations: 
\begin{align}
    c_0 + c_1 + c_2 + c_3 &  = 1 \\
    c_0 - c_1 + c_2 - c_3 & = 0 \\
    c_0 + c_1 - c_2 - c_3 & = 0 \\
    c_0 - c_1 - c_2 + c_3 & = 0,
\end{align}
which is also 
\begin{equation}
    \begin{pmatrix}
        1 & 1 & 1 & 1 \\
        1 & -1 & 1 & -1 \\
        1 & 1 & -1 & -1 \\
        1 & -1 & -1 & 1
    \end{pmatrix}
    \begin{pmatrix}
        c_0 \\ c_1 \\ c_2 \\ c_3
    \end{pmatrix}
    = \begin{pmatrix}
        1 \\ 0 \\ 0 \\ 0
    \end{pmatrix}.
    \label{eqn:pauli-lse}
\end{equation}
Interestingly, the leftmost matrix is the Hadamard matrix on 2 qubits ($\mathcal{H}_2 = \mathcal{H}\otimes\mathcal{H}$) up to a global
factor of $1/2$, and the Hadamard matrix is self-inverse by a factor of $2^{-n}$:
\begin{equation}
    \mathcal{H}_n^{-1} = \frac{1}{2^n}\mathcal{H}_n,
\end{equation}
where $\mathcal{H}_n$ is the Hadamard matrix for $n$ qubits.
For a general $\Theta(H_i)$ of $n$ qubits, the Pauli coefficients $\mathbf{c}$ can be obtained by
\begin{equation}
    \mathbf{c} = \frac{1}{2^{n}} \mathcal{H}_n\ \vec{\Theta}(H_i),
\end{equation}
where $\vec\Theta(H_i)$ is the vectorized $\Theta(H_i)$.
Solving Eq.~(\ref{eqn:pauli-lse}), we can express $\Theta(H_i)$ in Eq.~(\ref{eqn:delta-ex}) as
\begin{equation}
    \Theta(H_i) = \frac{1}{4}(I + Z_1 + Z_2 + Z_1Z_2).
\end{equation}
However, doing this also requires the knowledge of all the eigenvalues of $\Theta(H_i)$. It would be ideal if the Pauli-Z expression of $H_i$ can be directly converted
to the Pauli-Z expression of $\Theta(H_i)$, which would appear challenging.

\section{Conclusion}
We proposed a method to formulate inequality-constrained problems into an unconstrained form with custom penalty functions without requiring the introduction of slack variables.
Two types of penalty functions are studied: the exponential function and the step penalty function. The exponential function returns the penalty of $e^{h(x)}$ (scaled by penalty coefficients),
which gives large values if the constraint is violated ($h(x)>0$) and gives small values if the constraint is satisfied ($h(x)\leq 0$). The step function returns 1 for solutions that violate the constraints
and return 0 for solutions that obey the constraints. We found that for exponential penalty, there is a trade-off between feasibility and optimality of solutions for different penalty coefficients.
As the penalty coefficient $\lambda_1$ increases, the feasibility tends to increase and the optimality tends to decrease. On the other hand, the step penalty function achieved higher feasibility and optimality rate 
compared to the slack-based penalty and the unbalanced quadratic penalty, while having a mean optimality gap that is close to that of the unbalanced penalty.

These improvements come at a cost that scales exponentially with respect to the number of terms in a constraint, but they eliminate the need for extra qubits, which is a critical advantage in the resource-limited
quantum hardware nowadays.
This would prove useful on problems that have far fewer variables involved in a single constraint compared to the number of variables in the entire problem.
For example, MKP requires $K\times L$ variables overall, but only at most $K$ or $L$ (whichever is larger) variables are in one constraint.
One promising future direction is to develop efficient decomposition techniques to express nonlinear penalties such as step functions in terms of Pauli operators, making them fully quantum-executable.
Overall, our results highlight the practicality and potential of implementing custom penalty strategies in gate-based quantum optimization and suggest a path toward more resource-efficient and accurate quantum solvers for
constrained combinatorial optimization problems.

\bibliography{ccop,qaoaref}

\begin{thebibliography}{10}

\bibitem{farhi2014quantum}
E.~Farhi, J.~Goldstone, and S.~Gutmann, ``A quantum approximate optimization algorithm,'' {\em arXiv:1411.4028}, 2014.

\bibitem{vqe}
J.~Tilly, H.~Chen, S.~Cao, D.~Picozzi, K.~Setia, Y.~Li, E.~Grant, L.~Wossnig, I.~Rungger, G.~H. Booth, and J.~Tennyson, ``The variational quantum eigensolver: A review of methods and best practices,'' {\em Physics Reports}, vol.~986, p.~1–128, Nov. 2022.

\bibitem{in-constraint}
T.~Hao, R.~Shaydulin, M.~Pistoia, and J.~Larson, ``Exploiting in-constraint energy in constrained variational quantum optimization,'' in {\em 2022 IEEE/ACM Third International Workshop on Quantum Computing Software (QCS)}, p.~100–106, IEEE, Nov. 2022.

\bibitem{assist-cvrp}
L.~Palackal, B.~Poggel, M.~Wulff, H.~Ehm, J.~M. Lorenz, and C.~B. Mendl, ``Quantum-assisted solution paths for the capacitated vehicle routing problem,'' in {\em 2023 IEEE International Conference on Quantum Computing and Engineering (QCE)}, IEEE, Sept. 2023.

\bibitem{ohzeki-inequality}
K.~Yonaga, M.~J. Miyama, and M.~Ohzeki, ``Solving inequality-constrained binary optimization problems on quantum annealer,'' {\em arXiv:2012.06119}, 2020.

\bibitem{ohzeki-subgradient}
T.~Takabayashi, T.~Goto, and M.~Ohzeki, ``Subgradient method using quantum annealing for inequality-constrained binary optimization problems,'' {\em arXiv:2411.06901}, 2024.

\bibitem{fischer-lagrangian}
M.~L. Fisher, ``The lagrangian relaxation method for solving integer programming problems,'' {\em Management Science}, vol.~50, no.~12, pp.~1861--1871, 2004.

\bibitem{subgradient-1qbit}
S.~Karimi and P.~Ronagh, ``A subgradient approach for constrained binary optimization via quantum adiabatic evolution,'' {\em arXiv:1605.09462}, 2017.

\bibitem{vqec}
T.~V. Le and V.~Kekatos, ``Variational quantum eigensolver with constraints (vqec): Solving constrained optimization problems via vqe,'' 2024.

\bibitem{lagrangian-qubo}
E.~Gabbassov, G.~Rosenberg, and A.~Scherer, ``Lagrangian duality in quantum optimization: Overcoming qubo limitations for constrained problems,'' {\em arXiv:2310.04542}, 2024.

\bibitem{th-gm-qaoa}
J.~Golden, A.~Bartschi, D.~O’Malley, and S.~Eidenbenz, ``Threshold-based quantum optimization,'' in {\em 2021 IEEE International Conference on Quantum Computing and Engineering (QCE)}, p.~137–147, IEEE, Oct. 2021.

\bibitem{ub-dwave}
J.~A. Montañez-Barrera, P.~van~den Heuvel, D.~Willsch, and K.~Michielsen, ``Improving performance in combinatorial optimization problems with inequality constraints: An evaluation of the unbalanced penalization method on d-wave advantage,'' in {\em 2023 IEEE International Conference on Quantum Computing and Engineering (QCE)}, p.~535–542, IEEE, Sept. 2023.

\bibitem{ub-penalty}
J.~A. Montañez-Barrera, D.~Willsch, A.~Maldonado-Romo, and K.~Michielsen, ``Unbalanced penalization: a new approach to encode inequality constraints of combinatorial problems for quantum optimization algorithms,'' {\em Quantum Science and Technology}, vol.~9, p.~025022, Apr. 2024.

\bibitem{grover-mixer}
A.~Bartschi and S.~Eidenbenz, ``Grover mixers for {QAOA}: Shifting complexity from mixer design to state preparation,'' in {\em 2020 {IEEE} International Conference on Quantum Computing and Engineering ({QCE})}, {IEEE}, oct 2020.

\bibitem{constraint-preserving-qaoa}
F.~G. Fuchs, K.~O. Lye, H.~Møll~Nilsen, A.~J. Stasik, and G.~Sartor, ``Constraint preserving mixers for the quantum approximate optimization algorithm,'' {\em Algorithms}, vol.~15, p.~202, June 2022.

\bibitem{hea}
A.~Kandala, A.~Mezzacapo, K.~Temme, M.~Takita, M.~Brink, J.~M. Chow, and J.~M. Gambetta, ``Hardware-efficient variational quantum eigensolver for small molecules and quantum magnets,'' {\em Nature}, vol.~549, pp.~242--246, Sep 2017.

\bibitem{practical-hea}
L.~Leone, S.~F.~E. Oliviero, L.~Cincio, and M.~Cerezo, ``On the practical usefulness of the hardware efficient ansatz,'' {\em arXiv:2211.01477}, 2022.

\bibitem{qiskit2024}
A.~Javadi-Abhari, M.~Treinish, K.~Krsulich, C.~J. Wood, J.~Lishman, J.~Gacon, S.~Martiel, P.~D. Nation, L.~S. Bishop, A.~W. Cross, B.~R. Johnson, and J.~M. Gambetta, ``Quantum computing with {Q}iskit,'' {\em arXiv:2405.08810}, 2024.

\bibitem{l-bfgs-b}
J.~L. Morales and J.~Nocedal, ``Remark on “algorithm 778: L-bfgs-b: Fortran subroutines for large-scale bound constrained optimization”,'' {\em ACM Trans. Math. Softw.}, vol.~38, dec 2011.

\bibitem{scipy}
P.~Virtanen, R.~Gommers, T.~E. Oliphant, M.~Haberland, T.~Reddy, D.~Cournapeau, E.~Burovski, P.~Peterson, W.~Weckesser, J.~Bright, S.~J. {van der Walt}, M.~Brett, J.~Wilson, K.~J. Millman, N.~Mayorov, A.~R.~J. Nelson, E.~Jones, R.~Kern, E.~Larson, C.~J. Carey, {\.I}.~Polat, Y.~Feng, E.~W. Moore, J.~{VanderPlas}, D.~Laxalde, J.~Perktold, R.~Cimrman, I.~Henriksen, E.~A. Quintero, C.~R. Harris, A.~M. Archibald, A.~H. Ribeiro, F.~Pedregosa, P.~{van Mulbregt}, and {SciPy 1.0 Contributors}, ``{{SciPy} 1.0: Fundamental Algorithms for Scientific Computing in Python},'' {\em Nature Methods}, vol.~17, pp.~261--272, 2020.

\bibitem{bfgs-qiskit}
``Ibm quantum documentation.'' \url{https://quantum.cloud.ibm.com/docs/en/api/qiskit/0.40/qiskit.algorithms.optimizers.L_BFGS_B}.

\bibitem{penalty-weights}
M.~Ayodele, {\em Penalty Weights in QUBO Formulations: Permutation Problems}, p.~159–174.
\newblock Springer International Publishing, 2022.

\end{thebibliography}
\bibliographystyle{ieeetr}
\end{document}